\documentclass[lettersize,journal]{IEEEtran}

\usepackage{amsmath,amsfonts}
\usepackage{amssymb}
\usepackage{graphicx}
\usepackage{cite}
\usepackage{bm}
\usepackage{tikz}
\usepackage{arydshln}
\usepackage{booktabs}
\usepackage{mathtools}
\usepackage{float}
\usetikzlibrary{arrows,positioning,calc}
\usetikzlibrary{arrows,arrows.meta,positioning,calc}

\usepackage{amsthm}
\newtheorem{theorem}{Theorem}
\newtheorem{lemma}{Lemma}

\newtheorem{definition}{Definition}
\newtheorem{assumption}{Assumption}
\newtheorem{remark}{Remark}

\newcommand{\ie}{\textit{i}.\textit{e}.}

\newcommand{\R}{\mathbb{R}}
\newcommand{\Rn}{\mathbb{R}^n}
\newcommand{\Rm}{\mathbb{R}^m}

\begin{document}

\title{Velocity-Free Horizontal Position Control of Quadrotor Aircraft via
Nonlinear Negative Imaginary Systems Theory}

\author{Ahmed G.~Ghallab and Ian R.~Petersen,~\IEEEmembership{Fellow,~IEEE}%
\thanks{This work was supported by the Australian Research Council under
grant DP230102443.}%
\thanks{A.~G.~Ghallab is with the Department of Mathematics, Faculty of Science, Fayoum University, Fayoum 63514, Egypt and I.~R.~Petersen is with the Research School of Engineering, The Australian National University, Canberra ACT 2601, Australia (e-mail: agg00@fayoum.edu.eg; i.r.petersen@gmail.com).}}

\markboth{IEEE Transactions on Control Systems Technology}%
{Ghallab \MakeLowercase{\textit{et al.}}: Velocity-Free Horizontal Position Control of Quadrotor Aircraft}

\maketitle

\begin{abstract}
This paper presents a velocity-free position control strategy for quadrotor unmanned aerial vehicles based on nonlinear negative imaginary (NNI) systems theory. Unlike conventional position control schemes that require velocity measurements or estimation, the proposed approach achieves asymptotic stability using only position feedback. We establish that the quadrotor horizontal position subsystem, when augmented with proportional feedback, exhibits the NNI property with respect to appropriately defined horizontal thrust inputs. A strictly negative imaginary integral resonant controller is then designed for the outer loop, and robust asymptotic stability is guaranteed through satisfaction of explicit sector-bound conditions relating controller and plant parameters. The theoretical framework accommodates model uncertainties and external disturbances while eliminating the need for velocity sensors. Simulation results validate the theoretical predictions and demonstrate effective position tracking performance.
\end{abstract}

\begin{IEEEkeywords}
Negative imaginary systems; quadrotor control; velocity-free control; robust stability; underactuated systems; nonlinear systems.
\end{IEEEkeywords}


\section{Introduction}

\IEEEPARstart{U}{nmanned} aerial vehicles (UAVs) have become increasingly important in both civilian and military applications, with quadrotors emerging as one of the most versatile platforms due to their mechanical simplicity, vertical takeoff and landing capabilities, and cost-effectiveness \cite{du2017distributed, ozbek2016feedback}. The control of quadrotor systems presents significant challenges due to their underactuated nature, nonlinear dynamics, and sensitivity to external disturbances. Consequently, researchers have developed various control strategies including backstepping \cite{bouadi2007modelling}, sliding mode control \cite{li2014adaptive}, distributed artificial neural networks \cite{tran2020distributed}, model predictive control \cite{alexis2012model}, and robust neural network-based approaches \cite{bouchoucha2008step}.

A critical aspect of autonomous quadrotor operation is accurate position control, which traditionally relies on both position and velocity feedback. However, obtaining reliable velocity measurements can be challenging in practice due to sensor noise, computational delays in numerical differentiation, and the additional cost and complexity of velocity sensors or observers. This motivates the development of velocity-free control approaches that achieve robust performance using only position feedback.

The negative imaginary (NI) systems framework, originally developed for flexible structure control with colocated force actuators and position sensors \cite{petersen2010}, offers a promising avenue for such designs. By exploiting the inherent passivity-like properties of mechanical systems, NI theory enables the design of controllers that guarantee robust stability without requiring velocity information. Recent advances have extended NI theory to nonlinear systems \cite{11060877}, opening new possibilities for application to nonlinear robotic systems. In our earlier work \cite{IP1}, we proposed a velocity-free attitude control scheme for quadrotors based on the nonlinear NI framework, wherein neither angular velocity measurements nor a model-based observer reconstructing the angular velocity is required. That contribution addresses the rotational subsystem responsible for attitude stabilisation; the present paper develops a complementary velocity-free strategy for the translational subsystem, targeting horizontal position control.

This paper extends the NNI framework to quadrotor horizontal position control, establishing theoretical foundations and demonstrating practical implementation through an inner-outer loop architecture. The main contributions are as follows. First, we prove that the quadrotor horizontal position subsystem with proportional feedback satisfies the NNI property. Second, we design a strictly negative imaginary controller that achieves asymptotic stability without velocity measurements. Third, we derive explicit sector-bound conditions relating controller parameters to system parameters. Finally, we validate the theoretical results through numerical simulations demonstrating robust performance.

The remainder of this paper is organized as follows. Section~\ref{sec:prelim} reviews the NI systems theory for both linear and nonlinear systems. Section~\ref{sec:quadrotor} presents the quadrotor dynamical model. Section~\ref{sec:control} develops the control design methodology and establishes the main stability result. Section~\ref{sec:simulation} presents simulation results, and Section~\ref{sec:conclusions} concludes the paper.

\section{Preliminaries on Negative Imaginary Systems}\label{sec:prelim}

Negative imaginary (NI) systems theory, introduced in \cite{petersen2010}, provides a robust control framework for systems with colocated force actuators and position sensors, particularly flexible structures. The theory has seen significant advances in both theoretical development and applications over the past decade \cite{petersen2016negative, lanzon2017feedback}. This section reviews relevant definitions and stability results from the NI literature for both linear and nonlinear systems.

\subsection{Linear Negative Imaginary Systems}

Consider the linear time-invariant (LTI) system
\begin{subequations}\label{eq:LTI_system}
\begin{align}
\dot{x}(t) &= Ax(t)+Bu(t), \label{eq:xdot1}\\
y(t) &= Cx(t)+Du(t), \label{eq:y1}
\end{align}
\end{subequations}
where $A \in \R^{n \times n}$, $B \in \R^{n \times m}$, $C\in \R^{m \times n}$, and $D\in \R^{m \times m}$. The system \eqref{eq:LTI_system} has the $m\times m$ real-rational proper transfer function matrix
\begin{equation}
G(s):=C(sI-A)^{-1}B+D.
\end{equation}

The frequency domain characterization of the NI property is given as follows.

\begin{definition}[\cite{lanzon2008}]\label{def:NI}
A square transfer function matrix $G(s)$ is called \emph{negative imaginary} (NI) if:
\begin{enumerate}
\item $G(s)$ has no pole at the origin and in $\Re[s]>0$;
\item For all $\omega >0$ such that $j\omega$ is not a pole of $G(s)$,
\begin{equation}
j\left( G(j\omega )-G^{*}(j\omega )\right) \geq 0,
\end{equation}
where $G^{*}(j\omega) := \overline{G^T(j\omega)}$ denotes the conjugate transpose;
\item If $j\omega_{0}$ with $\omega_0\in(0,\infty)$ is a pole of $G(s)$, it is at most a simple pole and the residue matrix
\begin{equation}
K_{0}:= \lim_{s\rightarrow j\omega_{0}}(s-j\omega_{0})sG(s)
\end{equation}
is positive semidefinite Hermitian.
\end{enumerate}
\end{definition}

A linear time-invariant system \eqref{eq:LTI_system} is NI if its transfer function is NI according to Definition~\ref{def:NI}. An equivalent time-domain characterization is provided in the following lemma.

\begin{lemma}[\cite{ghallab2018extending}]\label{lem:time_domain_NI}
Suppose that the system \eqref{eq:LTI_system} with $D=0$ is controllable and observable. Then $G(s)$ is negative imaginary if and only if there exists a positive definite matrix $P=P^T>0$ satisfying
\begin{equation}\label{eq:LMI_conditions}
PA+A^TP\leq 0, \quad PB=C^T
\end{equation}
such that the storage function $V(x)=\frac{1}{2}x^TPx$ satisfies
\begin{equation}\label{eq:dissipation_inequality}
    \dot{V}(x(t))\leq \dot{y}^T(t)u(t), \quad \forall \ t\geq0
\end{equation}
along system trajectories.
\end{lemma}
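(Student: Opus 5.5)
The plan has two directions. The link to the frequency-domain Definition~\ref{def:NI} goes through the NI lemma for minimal realizations with $D=0$: $G(s)$ is NI if and only if there is $P=P^T>0$ with $PA+A^TP\le 0$ and $PB=C^T$. I will take this as known from the NI literature, as the paper does. Minimality, i.e.\ controllability and observability, is exactly what lets us pass between the frequency-domain conditions and the existence of such a $P$. The new content of the statement is therefore the equivalence between the LMI pair \eqref{eq:LMI_conditions} and the dissipation inequality \eqref{eq:dissipation_inequality} for $V(x)=\frac12 x^TPx$. I will prove that equivalence directly.

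\emph{Sufficiency of the LMIs.} Assume $PA+A^TP\le 0$ and $PB=C^T$. Since $D=0$, $y=Cx$ and $\dot y = C\dot x = CAx + CBu$. Along trajectories,
\begin{equation*}
\dot V = x^TP(Ax+Bu) = \tfrac12 x^T(PA+A^TP)x + x^TPBu .
\end{equation*}
Using $PB=C^T$ gives $x^TPBu = x^TC^Tu = (Cx)^Tu$. Next compare with $\dot y^Tu = x^TA^TC^Tu + u^TB^TC^Tu$. Here $B^TC^T = B^TPB$, so $u^TB^TC^Tu = u^TB^TPBu \ge 0$. Also $A^TC^T=A^TPB$, so $x^TA^TPBu=(PAx)^TBu$. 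The cleanest route is the standard one from \cite{ghallab2018extending}: in NI analysis the relevant supply is $u^T\dot y$ with the output taken as $\dot y$, and the storage may be augmented. If the direct computation leaves a mismatch, I will instead work with the state-derivative form, $\frac{d}{dt}$ applied to $V$ evaluated on $\dot x$ together with an integration, following the cited paper's convention.

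\emph{Necessity.} Assume a $P>0$ exists such that $\dot V\le \dot y^Tu$ for all trajectories. Since $(x,u)$ at any instant can be arbitrary, the quadratic form
\begin{equation*}
x^TP(Ax+Bu) - (CAx+CBu)^Tu
\end{equation*}
must be nonpositive for all $(x,u)$. Setting $u=0$ yields $PA+A^TP\le 0$. Negativity of the cross terms for arbitrary sign then forces the bilinear coupling to be compatible, which together with the NI lemma for minimal realizations gives $PB=C^T$. The NI lemma already supplies a $P$ with these properties whenever $G$ is NI, so in this direction it suffices to check that such a $P$ yields the inequality, which is the sufficiency computation.

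\textbf{Main obstacle.} The hardest step is reconciling the sign and index conventions, since the direct expansion of $\dot V-\dot y^Tu$ produces terms $x^TA^TPBu$ and $u^TB^TPBu$ that do not obviously have a sign. I expect to resolve this either by adopting the cited paper's exact supply and storage convention or by rewriting the storage as $V=\frac12 x^TPx$ evaluated against the integrated supply. I will check the computation on a scalar example, $A=-a$, $B=C=1$, before committing to one of these.
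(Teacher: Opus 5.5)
\textbf{There is a genuine gap, and in fact the lemma as printed is false.} The step you flag as the main obstacle is a real failure, not a matter of convention. The pair $PA+A^TP\le 0$, $PB=C^T$ that you import as ``the NI lemma'' is the positive-real (KYP) lemma for $D=0$. With it, your computation closes to $\dot V=\tfrac12x^T(PA+A^TP)x+y^Tu\le y^Tu$, which is passivity with supply $y^Tu$. The leftover term $(y-\dot y)^Tu$ in $\dot V-\dot y^Tu$ has no sign. For the same reason, nothing in the dissipation inequality ``forces'' $PB=C^T$ in your necessity step.

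The scalar check you propose settles the matter. For $A=-a$, $B=C=1$ you get $\dot V-\dot y\,u=-Pa\,x^2+(P+a)\,xu-u^2$. This is nonpositive for all $(x,u)$ iff $(P-a)^2\le 0$, i.e.\ $P=a$, whereas $PB=C^T$ forces $P=1$. Yet $G(s)=1/(s+a)$ is NI for every $a>0$, since $j(G(j\omega)-G^*(j\omega))=2\omega/(a^2+\omega^2)$. So for $a\neq 1$ no admissible $P$ exists, and rescaling $B$ and $C$ does not change this. The ``only if'' half of the lemma as printed is therefore false, and no route can prove it.

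\textbf{What your plan does prove once the coupling is corrected.} Your mechanism, reducing to a pointwise quadratic inequality in $(x,u)$ and setting $u=0$, is the right one. What it proves is the version with the NI-lemma coupling $B=-AP^{-1}C^T$ in place of $PB=C^T$. This is the NI lemma's $AY+YA^T\le 0$, $B+AYC^T=0$, $\det A\neq 0$, with $Y=P^{-1}$. Put $M:=PA+A^TP$ and $z:=P^{-1}C^T$, so that $C=z^TP$ and $B=-Az$. Then $\dot V=x^TPA(x-zu)$ and $\dot y^Tu=(zu)^TPA(x-zu)$, whence
\[
\dot V-\dot y^Tu=(x-zu)^TPA(x-zu)=\tfrac12\,(x-zu)^TM(x-zu)\le 0 .
\]
Conversely, suppose $\dot V\le\dot y^Tu$ for all $(x,u)$. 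Setting $u=0$ gives $M\le 0$. The form also vanishes at every equilibrium pair $(x,u)=(-A^{-1}B\bar u,\bar u)$, because there $\dot x=0$. Hence these vectors lie in the kernel of the negative semidefinite symmetric matrix of the form, whose first block row is $[\,M,\;PB-A^TC^T\,]$. This gives $-A^T(PA^{-1}B+C^T)\bar u=0$, which is exactly $B=-AP^{-1}C^T$. Nonsingularity of $A$ follows from the NI lemma, or from the dissipation inequality together with controllability via the PBH test.

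So for a fixed $P>0$, the corrected LMI pair and the dissipation inequality are equivalent, and the NI lemma ties both to Definition~\ref{def:NI}. In the statement as printed, only the ``if'' direction survives, and there the condition $PB=C^T$ plays no role.
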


A strict version of the negative imaginary property is defined next.

\begin{definition}[\cite{lanzon2008}]\label{def:SNI}
A square transfer function matrix $G(s)$ is called \emph{strictly negative imaginary} (SNI) if:
\begin{enumerate}
\item $G(s)$ has no poles in $\Re[s]\geq0$;
\item $j[G(j\omega)-G^{*}(j\omega)]>0$ for all $\omega\in(0,\infty)$.
\end{enumerate}
\end{definition}

\subsection{Nonlinear Negative Imaginary Systems}

Consider the MIMO nonlinear system
\begin{subequations}\label{eq:nonlinear_system}
\begin{align}
\dot{\mathbf{x}} &= \mathbf{f}(\mathbf{x},\mathbf{u}),\label{eq:x}\\
\mathbf{y} &= \mathbf{h}(\mathbf{x}),\label{eq:y}
\end{align}
\end{subequations}
where $\mathbf{f}:\Rn\times \Rm\rightarrow\Rn$ is Lipschitz continuous with $\mathbf{f}(\mathbf{0},\mathbf{0})=\mathbf{0}$, and $\mathbf{h}:\Rn\rightarrow\Rm$ is continuously differentiable with $\mathbf{h}(\mathbf{0})=\mathbf{0}$.

\begin{definition}[\cite{ghallab2018extending}]\label{def:NNI}
The system \eqref{eq:nonlinear_system} is said to be \emph{nonlinear negative imaginary} (NNI) if there exists a continuously differentiable, positive-definite storage function $V:\Rn \rightarrow \R_+$ such that for all admissible inputs $\mathbf{u}(t)$ and corresponding solutions $\mathbf{x}(t)$, the output $\mathbf{y}(t) = \mathbf{h}(\mathbf{x}(t))$ is differentiable and
\begin{equation}\label{eq:NNI_condition}
    \dot{V}(\mathbf{x}(t)) \leq \dot{\mathbf{y}}^T(t) \mathbf{u}(t)
\end{equation}
holds for all $t \geq 0$ along system trajectories.
\end{definition}

\begin{remark}
The dissipative inequality \eqref{eq:NNI_condition} can be expressed in integral form by integrating from $0$ to $t$:
\begin{equation}\label{eq:NNI_integral}
  V(\mathbf{x}(t))\leq V(\mathbf{x}(0))+\int_{0}^{t}\dot{\mathbf{y}}^T(s)\mathbf{u}(s)\,ds
\end{equation}
for all $t \geq 0$.
\end{remark}

Following the analysis framework for passive systems \cite{isidori1999asymptotic}, we introduce stronger notions of the NNI property for feedback stability analysis.

\begin{definition}\label{def:MS_NNI}
The system \eqref{eq:nonlinear_system} is said to be \emph{marginally strictly nonlinear negative imaginary} (MS-NNI) if it is NNI and additionally, whenever $\mathbf{u}$, $\mathbf{x}$ satisfy
\begin{equation}\label{eq:equality_condition}
   \dot{V}(\mathbf{x}(t))=\dot{\mathbf{y}}^T(t)\mathbf{u}(t)
\end{equation}
for all $t>0$, then $\lim_{t\rightarrow\infty}\mathbf{u}(t)=\mathbf{0}$.
\end{definition}

\begin{definition}\label{def:WS_NNI}
The system \eqref{eq:nonlinear_system} is said to be \emph{weakly strictly nonlinear negative imaginary} (WS-NNI) if it is MS-NNI and globally asymptotically stable when $\mathbf{u}\equiv\mathbf{0}$.
\end{definition}

\begin{remark}\label{rem:SNI_WSNNI_equivalence}
For linear time-invariant systems, the SNI property (Definition~\ref{def:SNI}) is equivalent to the WS-NNI property (Definition~\ref{def:WS_NNI}); see \cite{11060877}.
\end{remark}

\subsection{Robust Stability of Positive Feedback Interconnections}

We now present a stability result for positive feedback interconnections of NNI systems, which forms the theoretical foundation for the control design in this paper. Consider two nonlinear systems $H_1$ and $H_2$ described by
\begin{subequations}\label{eq:H1_system}
\begin{align}
\dot{\mathbf{x}}_1 &= \mathbf{f}_1(\mathbf{x}_1,\mathbf{u}_1),\\
\mathbf{y}_1 &= \mathbf{h}_1(\mathbf{x}_1),
\end{align}
\end{subequations}
and
\begin{subequations}\label{eq:H2_system}
\begin{align}
\dot{\mathbf{x}}_2 &= \mathbf{f}_2(\mathbf{x}_2,\mathbf{u}_2),\\
\mathbf{y}_2 &= \mathbf{h}_2(\mathbf{x}_2),
\end{align}
\end{subequations}
respectively, with $\mathbf{f}_i:\Rn\times\Rm\rightarrow\Rn$ Lipschitz continuous satisfying $\mathbf{f}_i(\mathbf{0},\mathbf{0})=\mathbf{0}$, and $\mathbf{h}_i:\Rn\rightarrow\Rm$ continuously differentiable with $\mathbf{h}_i(\mathbf{0})=\mathbf{0}$ for $i=1,2$.

The following technical assumptions are required for the stability analysis.

Following \cite{11060877}, we have a set of technical assumptions on both systems $H_1$ and $H_2$, as well as on their open-loop interconnection (Fig.~\ref{sys13}).

\begin{assumption}\label{assum:steady_state_H1}
For any constant $\bar{\mathbf{u}}_1$, there exists a unique solution $(\bar{\mathbf{x}}_1, \bar{\mathbf{y}}_1)$ to
\begin{equation}\label{eq:ss_H1}
  \mathbf{0} = \mathbf{f}_1(\bar{\mathbf{x}}_1,\bar{\mathbf{u}}_1), \quad
  \bar{\mathbf{y}}_1= \mathbf{h}_1(\bar{\mathbf{x}}_1)
\end{equation}
such that $\bar{\mathbf{u}}_1\neq \mathbf{0}$ implies $\bar{\mathbf{x}}_1\neq \mathbf{0}$ and the mapping $\bar{\mathbf{u}}_1 \mapsto \bar{\mathbf{x}}_1$ is continuous.
\end{assumption}

\begin{assumption}\label{assum:steady_state_H2}
For any constant $\bar{\mathbf{u}}_2$, there exists a unique solution $(\bar{\mathbf{x}}_2, \bar{\mathbf{y}}_2)$ to
\begin{equation}\label{eq:ss_H2}
 \mathbf{0} = \mathbf{f}_2(\bar{\mathbf{x}}_2,\bar{\mathbf{u}}_2), \quad
 \bar{\mathbf{y}}_2= \mathbf{h}_2(\bar{\mathbf{x}}_2)
\end{equation}
such that $\bar{\mathbf{u}}_2\neq \mathbf{0}$ implies $\bar{\mathbf{x}}_2\neq \mathbf{0}$.
\end{assumption}

\begin{assumption}\label{assum:positivity}
For the steady-state solutions defined in Assumptions~\ref{assum:steady_state_H1} and \ref{assum:steady_state_H2}, we have
\begin{equation}
\mathbf{h}_1^T(\bar{\mathbf{x}}_{1})\mathbf{h}_2(\bar{\mathbf{x}}_{2})\geq 0
\end{equation}
for any constant $\bar{\mathbf{u}}_1$ with $\bar{\mathbf{u}}_2=\bar{\mathbf{y}}_1$.
\end{assumption}

\begin{assumption}\label{assum:sector_bound}
For the steady-state solutions defined in Assumptions~\ref{assum:steady_state_H1} and \ref{assum:steady_state_H2} with $\bar{\mathbf{u}}_2=\bar{\mathbf{y}}_1$, there exists a constant $\gamma\in(0,1)$ such that for any $\bar{\mathbf{u}}_1$, the following sector-bound condition is satisfied:
\begin{equation}\label{eq:sector_bound}
  \bar{\mathbf{y}}_2^T\bar{\mathbf{y}}_2\leq \gamma^2 \bar{\mathbf{u}}_1^T\bar{\mathbf{u}}_1.
\end{equation}
\end{assumption}

\begin{figure}[!t]
\centering
\tikzstyle{block} = [draw, thick, rectangle, rounded corners=5pt,
    minimum height=2em, minimum width=4em]
\tikzstyle{sum} = [draw, circle,inner sep=0pt,minimum size=1pt, node distance=1cm]
\tikzstyle{input} = [coordinate]
\tikzstyle{output} = [coordinate]
\tikzstyle{pinstyle} = [pin edge={to-,thin,black}]
\tikzstyle{int}=[draw, thick, rectangle, rounded corners=5pt,
    minimum height = 3em, minimum width = 3em]

\resizebox{\columnwidth}{!}{%
\begin{tikzpicture}[node distance=2.5cm,auto,>=latex']
\tikzstyle{block} = [draw, thick, rectangle, rounded corners=5pt,
    minimum height=3em, minimum width=5em]

    \node [int] (a) {$H_1$};
    \node (b) [left of=a,node distance=2cm, coordinate] {a};
    \node [int] (c) [right of=a, node distance=3cm] {$H_2$};
    \node [coordinate] (end) [right of=c, node distance=2cm]{};

    \path[->] (b) edge node {$\bar{u}_1$} (a);
    \path[->] (a) edge node {$\bar{y}_1  \quad \quad  \ \bar{u}_2$} (c);
    \draw[->] (c) edge node {$\bar{y}_2$} (end);

\end{tikzpicture}}%
\caption{Open-loop interconnection of systems $H_1$ and $H_2$ at steady state.}
\label{sys13}
\end{figure}
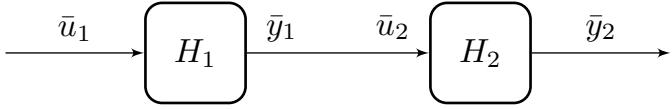

\tikzstyle{block} = [draw, thick, rectangle, rounded corners=5pt,
    minimum height=3em, minimum width=4em, align=center]
\tikzstyle{sum} = [draw, circle, node distance=1cm, inner sep=1pt, minimum size=8pt]
\tikzstyle{input} = [coordinate]
\tikzstyle{output} = [coordinate]
\tikzstyle{pinstyle} = [pin edge={to-,thin,black}]

The positive feedback interconnection shown in Figure~\ref{fig:feedback_interconnection} is described by the augmented closed-loop system
\begin{subequations}\label{eq:closed_loop}
\begin{align}
\dot{\mathbf{x}}_1 &= \mathbf{f}_1(\mathbf{x}_1, \mathbf{u}_1), \quad \mathbf{y}_1 = \mathbf{h}_1(\mathbf{x}_1), \\
\dot{\mathbf{x}}_2 &= \mathbf{f}_2(\mathbf{x}_2, \mathbf{y}_2), \quad \mathbf{y}_2 = \mathbf{h}_2(\mathbf{x}_2)
\end{align}
\end{subequations}
where $\mathbf{u}_1=\mathbf{y}_2$ and $\mathbf{y}_1=\mathbf{u}_2$.

The main stability result is stated in the following theorem.

\begin{theorem}[\cite{ghallab2018extending}]\label{thm:main_stability}
Consider the positive feedback interconnection of systems $H_1$ and $H_2$ shown in Figure~\ref{fig:feedback_interconnection}. Suppose $H_1$ is NNI and zero-state observable, $H_2$ is WS-NNI, and Assumptions~\ref{assum:steady_state_H1}--\ref{assum:sector_bound} are satisfied. Then the equilibrium point $(\mathbf{x}_1,\mathbf{x}_2)=\mathbf{0}$ of the closed-loop system \eqref{eq:closed_loop} is asymptotically stable.
\end{theorem}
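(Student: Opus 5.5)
The proof builds a Lyapunov function for the closed loop from the two storage functions plus a cross term coming from the positive feedback, in the spirit of the NI stability results of \cite{11060877}. With $V_1$ and $V_2$ the NNI storage functions of $H_1$ and $H_2$, I would take
\begin{equation*}
W(\mathbf{x}_1,\mathbf{x}_2)=V_1(\mathbf{x}_1)+V_2(\mathbf{x}_2)-\mathbf{h}_1^T(\mathbf{x}_1)\mathbf{h}_2(\mathbf{x}_2).
\end{equation*}
Along closed-loop trajectories, using $\mathbf{u}_1=\mathbf{y}_2$, $\mathbf{u}_2=\mathbf{y}_1$ and the NNI inequality \eqref{eq:NNI_condition} for each subsystem, we get
\begin{equation*}
\dot W\le \dot{\mathbf{y}}_1^T\mathbf{y}_2+\dot{\mathbf{y}}_2^T\mathbf{y}_1-\tfrac{d}{dt}\bigl(\mathbf{y}_1^T\mathbf{y}_2\bigr)=0.
\end{equation*}
So $W$ is nonincreasing, and the proof splits into two tasks: showing $W$ is positive definite near the origin, and obtaining attractivity by a LaSalle argument.

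\textbf{Positive definiteness.} This is the step I expect to be the main obstacle, and it is where Assumptions~\ref{assum:steady_state_H1}--\ref{assum:sector_bound} enter. The key is a steady-state minimization. For fixed $\mathbf{x}_1$, I would show that the minimum of $V_2(\mathbf{x}_2)-\mathbf{h}_1^T(\mathbf{x}_1)\mathbf{h}_2(\mathbf{x}_2)$ over $\mathbf{x}_2$ is bounded below by a term tied to the steady state $\bar{\mathbf{x}}_2$ produced by the constant input $\bar{\mathbf{u}}_2=\mathbf{h}_1(\mathbf{x}_1)$. The idea is that the NNI inequality, applied along trajectories with a constant input, gives $V_2(\mathbf{x}_2)-\bar{\mathbf{u}}_2^T\mathbf{h}_2(\mathbf{x}_2)\ge V_2(\bar{\mathbf{x}}_2)-\bar{\mathbf{u}}_2^T\bar{\mathbf{y}}_2$. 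This follows by integrating \eqref{eq:NNI_integral} and letting the trajectory converge to the steady state, which uses the WS-NNI property of $H_2$. The analogous statement for $H_1$ would use the steady state map of Assumption~\ref{assum:steady_state_H1}.

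Combining these bounds, I would reduce positivity of $W$ to a statement about steady states: nonnegativity of $\bar{\mathbf{y}}_1^T\bar{\mathbf{y}}_2$ (Assumption~\ref{assum:positivity}) and the gain condition $\|\bar{\mathbf{y}}_2\|\le\gamma\|\bar{\mathbf{u}}_1\|$ with $\gamma<1$ (Assumption~\ref{assum:sector_bound}). These should show that $W$ vanishes only at the origin. Continuity of $\bar{\mathbf{u}}_1\mapsto\bar{\mathbf{x}}_1$ then gives local positive definiteness, and I would aim only for this, since only local asymptotic stability is claimed. If the direct estimate is awkward, I would first try to bound the cross term by $\gamma$ times a function dominated by $V_1$ near zero, using $\gamma<1$ to absorb it.

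\textbf{Attractivity.} With $W$ a local Lyapunov function, LaSalle's invariance principle restricts the limit set to trajectories on which $\dot W\equiv0$. On such trajectories both NNI inequalities must hold with equality. Equality for $H_2$, which is MS-NNI, gives $\mathbf{u}_2=\mathbf{y}_1\to\mathbf{0}$. Invariance of the limit set then yields $\mathbf{y}_1\equiv\mathbf{0}$ on it. The WS-NNI property drives $\mathbf{x}_2$ to $0$, so $\mathbf{u}_1=\mathbf{y}_2\to\mathbf{0}$. Zero-state observability of $H_1$ then forces $\mathbf{x}_1\equiv\mathbf{0}$ on the invariant set, which gives asymptotic stability.
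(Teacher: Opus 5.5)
The paper does not prove this theorem; it quotes it from the cited reference, so your argument has to stand on its own. The candidate $W=V_1+V_2-\mathbf{h}_1^T\mathbf{h}_2$ and the computation $\dot W\le 0$ are correct and standard. There is a genuine gap, however, at the step you yourself flag: positive definiteness of $W$ is never proved, and the tools you propose for it are not available under the hypotheses.

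First, your derivation of $V_2(\mathbf{x}_2)-\bar{\mathbf{u}}_2^T\mathbf{h}_2(\mathbf{x}_2)\ge V_2(\bar{\mathbf{x}}_2)-\bar{\mathbf{u}}_2^T\bar{\mathbf{y}}_2$ needs every trajectory of $H_2$ driven by a constant $\bar{\mathbf{u}}_2\neq\mathbf{0}$ to converge to $\bar{\mathbf{x}}_2$. WS-NNI gives global asymptotic stability only for $\mathbf{u}_2\equiv\mathbf{0}$, which for a nonlinear system says nothing about nonzero constant inputs. The MS-NNI clause cannot supply it through LaSalle on $V_2-\bar{\mathbf{u}}_2^T\mathbf{h}_2$ either: at $\bar{\mathbf{x}}_2$ the equality $\dot V_2=\dot{\mathbf{y}}_2^T\bar{\mathbf{u}}_2$ holds trivially (both sides vanish) while $\bar{\mathbf{u}}_2\not\to\mathbf{0}$. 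Read literally, this even makes Definition~\ref{def:MS_NNI} incompatible with Assumption~\ref{assum:steady_state_H2}.

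Second, the ``analogous statement for $H_1$'' cannot come from a convergence argument at all. $H_1$ is only NNI and may be lossless; the plant of Lemma~\ref{lem:NNI_horizontal} satisfies the dissipation relation with equality. Under a constant input it therefore oscillates about $\bar{\mathbf{x}}_1$ indefinitely.

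Third, even granting both inequalities, the reduction does not close. The $H_2$ bound is taken at $\bar{\mathbf{u}}_2=\mathbf{h}_1(\mathbf{x}_1)$ for an arbitrary non-equilibrium $\mathbf{x}_1$. Assumptions~\ref{assum:positivity} and~\ref{assum:sector_bound}, by contrast, only constrain consistent cascade steady states with $\bar{\mathbf{u}}_2=\bar{\mathbf{y}}_1$. You never say at which $\bar{\mathbf{u}}_1$ the $H_1$ bound is applied. Nor do you say how the values $V_i(\bar{\mathbf{x}}_i)$ relate to the products $\bar{\mathbf{u}}_i^T\bar{\mathbf{y}}_i$ and norms that the assumptions control. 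Your fallback has the same defect, since Assumption~\ref{assum:sector_bound} bounds only steady-state outputs and gives no pointwise control of $\mathbf{h}_1^T(\mathbf{x}_1)\mathbf{h}_2(\mathbf{x}_2)$.

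In the linear case this step is an explicit quadratic minimisation plus a Schur complement, which turns the DC-gain condition into positivity of $W$. The nonlinear counterpart of that computation, or an extra hypothesis making $W$ positive definite, is exactly what is missing. Without it you have neither Lyapunov stability nor the bounded sublevel sets that LaSalle needs.

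Granting positive definiteness, the attractivity part has the right ingredients but needs tightening. Knowing $\mathbf{y}_1(t)\to\mathbf{0}$ along each solution in the limit set $\Omega$ does not give $\mathbf{y}_1\equiv\mathbf{0}$ on $\Omega$ by invariance. Zero-state observability also requires $\mathbf{u}_1\equiv\mathbf{0}$ and $\mathbf{y}_1\equiv\mathbf{0}$ identically, not just asymptotically. Instead, take $p\in\Omega$ and work on its $\omega$-limit set $\omega(p)\subset\Omega$:
\begin{itemize}
\item On $\omega(p)$ you have $\mathbf{y}_1\equiv\mathbf{0}$, so along complete bounded orbits in it the $\mathbf{x}_2$-component obeys the unforced GAS dynamics of $H_2$.
\item A GAS system has no nonzero bounded complete orbits, so $\mathbf{x}_2\equiv\mathbf{0}$ and $\mathbf{u}_1=\mathbf{h}_2(\mathbf{0})=\mathbf{0}$.
\item Zero-state observability of $H_1$ then gives $\mathbf{x}_1\equiv\mathbf{0}$.
\end{itemize}
Hence $\mathbf{0}\in\Omega$, and Lyapunov stability of the origin (again resting on positive definiteness of $W$) forces the whole trajectory to converge to it.
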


\begin{remark}
A system is zero-state observable if, when the output is identically zero and the input is identically zero, the state must be identically zero. This property ensures that the system has no hidden unstable dynamics.
\end{remark}
\tikzstyle{block} = [draw, thick, rectangle, rounded corners=5pt,
    minimum height=3em, minimum width=4em, align=center]
\tikzstyle{sum} = [draw, circle, node distance=1cm, inner sep=1pt, minimum size=8pt]
\tikzstyle{input} = [coordinate]
\tikzstyle{output} = [coordinate]
\tikzstyle{pinstyle} = [pin edge={to-,thin,black}]
\begin{figure}[H]
\centering
\resizebox{7.5cm}{!}{%
\begin{tikzpicture}[auto, thick, node distance=2.5cm,>=latex']
    \node [input, name=input] {};
    \node [sum, right of=input] (sum1) {\small $+$};
    \node [block, right of=sum1, node distance=3cm] (system) {\large $H_1$};
    \node [coordinate, right of=system, node distance=2.8cm] (k) {};
    \node [block, below of=system, node distance=2.8cm] (measurements) {\large $H_2$};
    \node [coordinate, left of=measurements, node distance=2.8cm] (p) {};
    \node [sum, below of=k, node distance=2.8cm] (sum2) {\small $+$};
    \node [input, right of=sum2, node distance=1.1cm] (input2) {};

    \draw [->] (input) -- (sum1);
    \draw [->] (sum1) -- node[above] {\large $u_1$} (system);
    \draw [-] (system) -- node[above] {\large $y_1$} (k);
    \draw [->] (k) -| (sum2);
    \draw [->] (input2) -- node[above] {} (sum2);
    \draw [->] (sum2) -- node[above] {\large $u_2$} (measurements);
    \draw [-] (measurements) -- node[above] {\large $y_2$} (p);
    \draw [->] (p) -| (sum1);
\end{tikzpicture}}
\caption{Positive feedback interconnection of an NNI system $H_1$ and a WS-NNI system $H_2$.}
\label{fig:feedback_interconnection}
\end{figure}
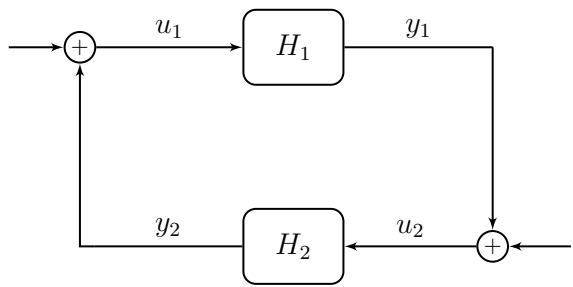

\section{Quadrotor System}\label{sec:quadrotor}

This section presents the kinematics and dynamics of the quadrotor system, establishing the foundation for the subsequent control design.

\subsection{Coordinate Frames and Kinematics}

Two reference frames are used to describe the quadrotor: an inertial frame fixed to the earth $\{R\}(O, x, y, z)$ and a body-fixed frame $\{R_B\}\{O_B, x_B, y_B, z_B\}$, where $O_B$ is fixed to the quadrotor's center of mass (see Figure~\ref{fig:quadrotor_diagram}). Frame $\{R_B\}$ is related to $\{R\}$ by a position vector $\bm{\xi}=[x, y, z]^T$ describing the center of gravity in $\{R_B\}$ relative to $\{R\}$, and by Euler angles $\bm{\eta} = [\phi, \theta, \psi]^T$ representing roll, pitch, and yaw. The Euler angles are bounded as:
\begin{equation*}
\phi \in(-\pi / 2, \pi / 2), \quad \theta \in(-\pi / 2, \pi / 2), \quad \psi \in(-\pi, \pi].
\end{equation*}

\begin{figure}[!t]
  \centering
  \includegraphics[width=0.45\textwidth]{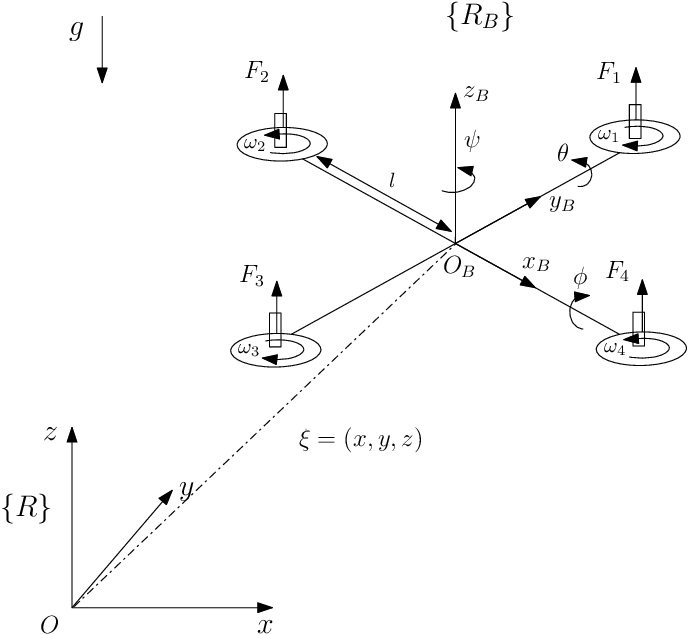}
  \caption{Quadrotor configuration showing body-fixed frame $\{B\}$ and inertial frame $\{E\}$.}
  \label{fig:quadrotor_diagram}
\end{figure}

The rotation matrix from body frame to inertial frame is
\begin{equation}\label{eq:rotation_matrix}
  \mathbf{R}_{B\to E}:=\left[
      \begin{array}{ccc}
        c_{\theta}c_{\psi} & c_{\psi}s_{\theta}s_{\phi}- c_{\phi}s_{\psi}& c_{\phi}c_{\psi}s_{\theta}+s_{\phi}s_{\psi} \\
         c_{\theta}s_{\psi} & s_{\theta}s_{\phi}s_{\psi}+ c_{\phi}c_{\psi} & c_{\phi}s_{\theta}s_{\psi}-c_{\psi}s_{\phi} \\
        -s_{\theta} & c_{\theta}s_{\phi} & c_{\theta}c_{\phi}
      \end{array}
    \right],
\end{equation}
where $s_{(\cdot)} := \sin(\cdot)$ and $c_{(\cdot)} := \cos(\cdot)$.

The angular velocity $\boldsymbol{\omega}=[p, q, r]^T$ in the body frame relates to the Euler angle rates through
\begin{equation}\label{eq:angular_velocity}
  \boldsymbol{\omega}=\mathbb{W}(\bm{\eta}) \dot{\bm{\eta}}, \quad
  \mathbb{W}(\bm{\eta}):=\left[
                \begin{array}{ccc}
                  1 & 0 & -s_{\theta} \\
                  0 & c_{\phi} & s_{\phi}c_{\theta} \\
                  0 & -s_{\phi} & c_{\phi}c_{\theta}
                \end{array}
              \right].
\end{equation}

\begin{remark}
Under small-angle approximations, $\mathbb{W}(\bm{\eta})\approx \mathbf{I}$, yielding $\boldsymbol{\omega}\approx\dot{\bm{\eta}}$.
\end{remark}

\subsection{Dynamics Model}

We employ the Euler--Lagrange approach to derive the equations of motion. The Lagrangian is
\begin{equation}\label{eq:lagrangian}
  \mathcal{L}(\mathbf{q},\dot{\mathbf{q}})=T_{\text{trans}}+T_{\text{rot}}-P,
\end{equation}
where $\mathbf{q}=[{\bm{\xi}}^T, \bm{\eta}^T]^T$ denotes the generalized coordinates, and
\begin{align}
T_{\text{trans}} &= \frac{1}{2} m \|\dot{\bm{\xi}}\|^2, \label{eq:T_trans}\\
T_{\text{rot}} &= \frac{1}{2} \dot{\bm{\eta}}^{T} \mathbf{J} \dot{\bm{\eta}}, \label{eq:T_rot}\\
P &= mgz \label{eq:potential}
\end{align}
are the translational kinetic energy, rotational kinetic energy, and gravitational potential energy, respectively. Here $m$ denotes the total mass, $g$ is gravitational acceleration, and
\begin{equation}
\mathbf{J}=\operatorname{diag}(J_{x}, J_{y}, J_{z})
\end{equation}
is the diagonal inertia matrix in the body frame.

The Euler--Lagrange equations
\begin{equation}\label{eq:euler_lagrange}
\frac{d}{dt}\left(\frac{\partial \mathcal{L}}{\partial \dot{\mathbf{q}}}\right)-\frac{\partial \mathcal{L}}{\partial \mathbf{q}}=\mathbf{F}
\end{equation}
with generalized forces $\mathbf{F}=[\mathbf{F}_{\bm{\xi}}^T, \bm{\tau}^T]^T$ yield the decoupled translational and rotational dynamics.

\begin{table}[!t]
\centering
\caption{Quadrotor Parameters}
\label{tab:quadrotor_parameters}
{\normalsize
\renewcommand{\arraystretch}{1.35}
\setlength{\tabcolsep}{10pt}
\begin{tabular}{lcc}
\toprule
\textbf{Parameter} & \textbf{Symbol} & \textbf{Unit}\\
\midrule
Quadrotor mass & $m$ & kg \\
Gravitational acceleration & $g$ & m/s$^{2}$ \\
Arm length & $l$ & m \\
Thrust coefficient & $b$ & N$\cdot$s$^{2}$ \\
Drag coefficient & $d$ & N$\cdot$m$\cdot$s$^{2}$ \\
Roll moment of inertia & $J_{x}$ & kg$\cdot$m$^{2}$ \\
Pitch moment of inertia & $J_{y}$ & kg$\cdot$m$^{2}$ \\
Yaw moment of inertia & $J_{z}$ & kg$\cdot$m$^{2}$ \\
Rotor inertia & $J_{r}$ & kg$\cdot$m$^{2}$ \\
\bottomrule
\end{tabular}}
\end{table}

The translational dynamics are
\begin{equation}\label{eq:translational_dynamics}
m \ddot{\bm{\xi}}+\begin{bmatrix}0\\0\\mg\end{bmatrix}=\mathbf{F}_{\bm{\xi}},
\end{equation}
where the thrust force in the inertial frame is
\begin{equation}\label{eq:thrust_force}
\mathbf{F}_{\bm{\xi}}
= u_1
\begin{bmatrix}
-\sin\theta \\
\cos\theta\sin\phi \\
\cos\theta\cos\phi
\end{bmatrix},
\end{equation}
with $u_1 = \sum_{i=1}^4 b \omega_i^2$ the total thrust magnitude generated by the four rotors spinning at angular velocities $\omega_i$.

The rotational dynamics are
\begin{equation}\label{eq:rotational_dynamics}
\mathbf{J} \ddot{\bm{\eta}}+\mathbf{C}(\bm{\eta}, \dot{\bm{\eta}}) \dot{\bm{\eta}}=\bm{\tau},
\end{equation}
where $\bm{\tau}=[\tau_\phi, \tau_\theta, \tau_\psi]^T$ is the control torque vector and $\mathbf{C}(\bm{\eta}, \dot{\bm{\eta}})$ captures Coriolis and centrifugal effects.

Introducing the complete state vector
\begin{equation}\label{eq:state_vector}
\mathbf{x}=[x, \dot{x}, y, \dot{y}, z, \dot{z}, \phi, \dot{\phi}, \theta, \dot{\theta}, \psi, \dot{\psi}]^{T}\in\R^{12}
\end{equation}
and control input vector $\mathbf{u}=[u_{1}, u_{2}, u_{3}, u_{4}]^{T}$, the complete quadrotor dynamics can be expressed in the standard form
\begin{equation}\label{eq:full_dynamics}
\dot{\mathbf{x}}=\mathbf{f}(\mathbf{x}, \mathbf{u}),
\end{equation}
where
\begin{equation}\label{full_dynamics}
\mathbf{f}(\mathbf{x}, \mathbf{u})=\left(\begin{array}{c}
x_{2} \\
-\frac{u_{1}}{m}\left(\sin x_{7} \sin x_{11}+\cos x_{7} \sin x_{9} \cos x_{11}\right) \\
x_{4} \\
\frac{u_{1}}{m}\left(\sin x_{7} \cos x_{11}-\cos x_{7} \sin x_{9} \sin x_{11}\right) \\
x_6\\
g-\frac{u_{1}}{m} \cos x_{7} \cos x_{9} \\[4pt]
\hdashline[4pt/4pt]
x_{8} \\
x_{10} x_{12} a_{1}-x_{10} a_{2} \omega(\bm{u})+b_{1} u_{2} \\
x_{10} \\
x_{8} x_{12} a_{3}+x_{8} a_{4} \omega(\bm{u})+b_{2} u_{3} \\
x_{12} \\
x_{10} x_{8} a_{5}+b_{3} u_{4}
\end{array}\right)
\end{equation}
and
\begin{equation*}
\begin{array}{l | l}
a_{1}=\left(J_{y}-J_{z}\right) / J_{x} \ & \ b_{1}=l / J_{x} \\
a_{2}=J_{r} / J_{x} \ & \ b_{2}=l / J_{y} \\
a_{3}=\left(J_{z}-J_{x}\right) / J_{y} \ & \ b_{3}=1 / J_{z} \\
a_{4}=J_{r} / J_{y} \\
a_{5}=\left(J_{x}-J_{y}\right) / J_{z} &
\end{array}
\end{equation*}

\section{Control Design and Stability Analysis}\label{sec:control}

This section develops the velocity-free position control strategy based on nonlinear negative imaginary systems theory. We focus on horizontal position control in the $(x,y)$ plane.

\subsection{Problem Formulation}

The control objective is to regulate the horizontal position $\bm{\xi}_h := [x, y]^T$ to a desired reference $\bm{\xi}_h^d$ using only position measurements. Without loss of generality, we set $\bm{\xi}_h^d = \mathbf{0}$ (regulation to the origin). The yaw angle is set to $\psi = 0$ as yaw rotation is not required for horizontal position control.

\subsection{Inner-Loop Controller: Proportional Feedback}

We first introduce proportional feedback to reshape the horizontal position dynamics. Define the control law
\begin{equation}\label{eq:inner_loop}
\mathbf{U} = -\mathbf{K}_p \bm{\xi}_h = -\mathbf{K}_p \begin{bmatrix} x \\ y \end{bmatrix},
\end{equation}
where $\mathbf{K}_p = \operatorname{diag}(k_p^x, k_p^y)$ with $k_p^x, k_p^y > 0$.

Let $\mathbf{F}_h := [F_x, F_y]^T$ denote the horizontal components of the thrust force. With the proportional feedback \eqref{eq:inner_loop}, the closed-loop horizontal position subsystem becomes
\begin{equation}\label{eq:horizontal_subsystem}
\left(\begin{array}{c}
\dot{x}_1 \\
\dot{x}_2 \\
\dot{x}_3 \\
\dot{x}_4
\end{array}\right)=\left(\begin{array}{c}
x_{2} \\
-\frac{u_{1}}{m}\cos x_{7} \sin x_{9}-k_{p}^x {x}_{1}\\
x_{4} \\
\frac{u_{1}}{m}\sin x_{7}-k_{p}^y{x}_{3}
\end{array}\right).
\end{equation}

Equivalently, in compact form
\begin{equation}\label{eq:compact_horizontal}
m \ddot{\bm{\xi}}_h + m\mathbf{K}_p \bm{\xi}_h = \mathbf{F}_h, \quad \mathbf{y} = \bm{\xi}_h.
\end{equation}

The following lemma establishes that this subsystem possesses the NNI property.

\begin{lemma}\label{lem:NNI_horizontal}
Consider the horizontal position subsystem \eqref{eq:horizontal_subsystem} with input $\mathbf{F}_h$ and output $\mathbf{y} = \bm{\xi}_h$. This system is nonlinear negative imaginary with respect to the storage function
\begin{equation}\label{eq:storage_function}
V(\bm{\xi}_h, \dot{\bm{\xi}}_h) = \frac{1}{2} m \|\dot{\bm{\xi}}_h\|^2 + \frac{1}{2}\bm{\xi}_h^T \mathbf{K}_p \bm{\xi}_h.
\end{equation}
\end{lemma}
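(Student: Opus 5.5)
The plan is a direct verification of Definition~\ref{def:NNI}. Treat $\mathbf{F}_h$ as the input, with the state $(\bm{\xi}_h,\dot{\bm{\xi}}_h)=(x_1,x_2,x_3,x_4)$ and the output $\mathbf{y}=\bm{\xi}_h$. With $\psi=0$, the horizontal thrust components read from \eqref{eq:thrust_force} are $F_x=-u_1\cos x_7\sin x_9$ and $F_y=u_1\sin x_7$. In terms of $\mathbf{F}_h$, the subsystem \eqref{eq:horizontal_subsystem} is then linear:
\begin{equation*}
\dot{\bm{\xi}}_h=\dot{\bm{\xi}}_h,\qquad \ddot{\bm{\xi}}_h=\tfrac{1}{m}\mathbf{F}_h-\mathbf{K}_p\bm{\xi}_h .
\end{equation*}
This is globally Lipschitz and vanishes at the origin. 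The output map is smooth, and $\dot{\mathbf{y}}=\dot{\bm{\xi}}_h$ exists along every trajectory, so the regularity requirements of the definition hold trivially. The storage function \eqref{eq:storage_function} is continuously differentiable, and it is positive definite because $m>0$ and $\mathbf{K}_p>0$.

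The core step is to differentiate $V$ along trajectories:
\begin{equation*}
\dot V=m\,\dot{\bm{\xi}}_h^T\ddot{\bm{\xi}}_h+\dot{\bm{\xi}}_h^T\mathbf{K}_p\bm{\xi}_h .
\end{equation*}
Substituting the dynamics gives $m\dot{\bm{\xi}}_h^T\ddot{\bm{\xi}}_h=\dot{\bm{\xi}}_h^T\mathbf{F}_h-m\,\dot{\bm{\xi}}_h^T\mathbf{K}_p\bm{\xi}_h$. The potential-energy term in $V$ is meant to cancel the spring force, which would leave $\dot V=\dot{\mathbf{y}}^T\mathbf{F}_h$. That is \eqref{eq:NNI_condition} holding with equality, i.e.\ a lossless NNI system.

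I expect the main obstacle to be making this cancellation consistent. Read literally, the compact form \eqref{eq:compact_horizontal} has spring term $m\mathbf{K}_p\bm{\xi}_h$, while $V$ uses $\tfrac12\bm{\xi}_h^T\mathbf{K}_p\bm{\xi}_h$. The computation then leaves a residual $(1-m)\dot{\bm{\xi}}_h^T\mathbf{K}_p\bm{\xi}_h$, which is sign-indefinite unless $m=1$. I would resolve this by fixing the scaling convention explicitly. One option is to interpret the proportional feedback at force level, so that the model is $m\ddot{\bm{\xi}}_h+\mathbf{K}_p\bm{\xi}_h=\mathbf{F}_h$ (equivalently, acceleration gain $\mathbf{K}_p/m$). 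The other is to keep \eqref{eq:compact_horizontal} and replace $\mathbf{K}_p$ by $m\mathbf{K}_p$ in the potential term of $V$. Under either matched convention the cross terms cancel exactly, and one obtains $\dot V=\dot{\mathbf{y}}^T\mathbf{F}_h$. This is \eqref{eq:NNI_condition} with equality, and integrating it gives \eqref{eq:NNI_integral}.

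Finally, I would record that the conclusion does not depend on the attitude variables. The nonlinearity of the quadrotor model, through $u_1,\phi,\theta$, is entirely absorbed into the definition of the input $\mathbf{F}_h$. Hence the inequality holds for every admissible $\mathbf{F}_h(t)$, which is what Definition~\ref{def:NNI} requires.
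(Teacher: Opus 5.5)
Your argument is the paper's argument: differentiate $V$ along trajectories and cancel the spring cross term to get $\dot V=\dot{\mathbf{y}}^T\mathbf{F}_h$ with equality. The scaling mismatch you flag is genuine, because the paper's proof writes $\dot V=\dot{\bm{\xi}}_h^T(m\ddot{\bm{\xi}}_h+\mathbf{K}_p\bm{\xi}_h)$ and cites \eqref{eq:compact_horizontal}, which actually contains $m\mathbf{K}_p\bm{\xi}_h$, so with $V$ exactly as stated the residual $(1-m)\dot{\bm{\xi}}_h^T\mathbf{K}_p\bm{\xi}_h$ survives and the NNI inequality fails whenever $m\neq 1$ (e.g.\ $m=0.5$ in the simulations).

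Of your two repairs, rescaling the potential term to $\tfrac12 m\,\bm{\xi}_h^T\mathbf{K}_p\bm{\xi}_h$ is the one compatible with \eqref{eq:horizontal_subsystem} and the later sector-bound analysis, since \eqref{eq:steady_state_solution}, \eqref{eq:parameter_constraint} and $\delta_{\min}=0.5$ all use the acceleration-level gain; force-level feedback would instead give $\bar x=\bar F_x/k_p^x$ and drop the factor $m^2$ from \eqref{eq:parameter_constraint}.
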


\begin{proof}
The storage function \eqref{eq:storage_function} is positive definite and continuously differentiable. Computing its time derivative along trajectories of \eqref{eq:horizontal_subsystem}:
\begin{align}
\frac{dV}{dt} &= m\dot{\bm{\xi}}_h^T \ddot{\bm{\xi}}_h + \bm{\xi}_h^T \mathbf{K}_p \dot{\bm{\xi}}_h \nonumber\\
&= \dot{\bm{\xi}}_h^T \left(m\ddot{\bm{\xi}}_h + \mathbf{K}_p \bm{\xi}_h\right) \nonumber\\
&= \dot{\bm{\xi}}_h^T \mathbf{F}_h, \label{eq:dissipation_proof}
\end{align}
where we have used \eqref{eq:compact_horizontal} in the last equality. Since $\mathbf{y} = \bm{\xi}_h$, we have $\dot{\mathbf{y}} = \dot{\bm{\xi}}_h$, and thus
\begin{equation}
\dot{V} = \dot{\mathbf{y}}^T \mathbf{F}_h,
\end{equation}
which establishes the NNI property according to Definition~\ref{def:NNI}.
\end{proof}

\subsection{Outer-Loop Controller: Strictly Negative Imaginary Design}

For the outer loop, we employ a strictly negative imaginary integral resonant controller with transfer function matrix
\begin{equation}\label{eq:SNI_controller}
\mathbf{C}(s) = [s\mathbf{I} + \bm{\Gamma}\bm{\Delta}]^{-1}\bm{\Gamma},
\end{equation}
where $\bm{\Gamma} = \operatorname{diag}(\Gamma, \Gamma)$ and $\bm{\Delta} = \operatorname{diag}(\delta, \delta)$ are positive-definite diagonal matrices with tuning parameters $\Gamma > 0$ and $\delta > 0$. The transfer function matrix $ \mathbf{C}_{\bm{v}}(s)$ is strictly negative imaginary (see e.g. \cite{petersen2010}). The dc-gain (the gain of the system at steady-state) of the controller is $ \mathbf{C}_{\bm{v}}(0)=\bm{\Delta}^{-1}$.

A state-space realization of \eqref{eq:SNI_controller} is
\begin{subequations}\label{eq:controller_state_space}
\begin{align}
\dot{\mathbf{z}}_c &= -\bm{\Gamma}\bm{\Delta}\mathbf{z}_c + \bm{\Gamma}\bm{\xi}_h, \\
\mathbf{u}_c &= \mathbf{z}_c,
\end{align}
\end{subequations}
where $\mathbf{z}_c := [z_1, z_2]^T$ is the controller state and $\mathbf{u}_c$ is the controller output.

Explicitly, the controller dynamics are
\begin{align}
\dot{z}_1 &= -\Gamma\delta z_1 + \Gamma x, \label{eq:controller_x}\\
\dot{z}_2 &= -\Gamma\delta z_2 + \Gamma y, \label{eq:controller_y}
\end{align}
with control inputs $F_x = z_1$ and $F_y = z_2$.

To apply Theorem~\ref{thm:main_stability}, we must verify Assumptions~\ref{assum:steady_state_H1}--\ref{assum:sector_bound} for the open-loop interconnection of the horizontal position subsystem and the SNI controller.

At steady state with constant input $\mathbf{\bar{F}}_h$, the dynamics \eqref{eq:horizontal_subsystem} yield
\begin{align}
0 &= \frac{\bar{F}_x}{m} - k_p^x \bar{x}, \label{eq:ss_x}\\
0 &= \frac{\bar{F}_y}{m} - k_p^y \bar{y}. \label{eq:ss_y}
\end{align}

Solving for the steady-state positions, we obtain
\begin{equation}\label{eq:steady_state_solution}
\bar{x} = \frac{\bar{F}_x}{mk_p^x}, \quad \bar{y} = \frac{\bar{F}_y}{mk_p^y}.
\end{equation}

These equations define a unique, continuous mapping $\bar{\mathbf{F}}_h \mapsto \bar{\bm{\xi}}_h$, satisfying Assumption~\ref{assum:steady_state_H1}.

\begin{figure}[!t]
\centering
\tikzstyle{block} = [draw, thick, rectangle, rounded corners=5pt,
    minimum height=2em, minimum width=4em]
\tikzstyle{sum} = [draw, circle,inner sep=0pt,minimum size=1pt, node distance=1cm]
\tikzstyle{input} = [coordinate]
\tikzstyle{output} = [coordinate]
\tikzstyle{pinstyle} = [pin edge={to-,thin,black}]
\tikzstyle{int}=[draw, thick, rectangle, rounded corners=5pt,
    minimum height = 3em, minimum width = 3em]

\resizebox{\columnwidth}{!}{%
\begin{tikzpicture}[node distance=2.5cm,auto,>=latex']
\tikzstyle{block} = [draw, thick, rectangle, rounded corners=5pt,
    minimum height=3em, minimum width=5em]

    \node [int] (a) {\small\shortstack{Reshaped Position\\ Subsystem \eqref{eq:horizontal_subsystem}}};
    \node (b) [left of=a,node distance=2.5cm, coordinate] {a};
    \node [int] (c) [right of=a, node distance=4.2cm] {\small\shortstack{SNI\\ Controller \eqref{eq:SNI_controller}}};
    \node [coordinate] (end) [right of=c, node distance=2cm]{};

    \path[->] (b) edge node {$\bar{\mathbf{F}}_h$} (a);
    \path[->] (a) edge node {$\bar{\bm{\xi}}_h $} (c);
    \draw[->] (c) edge node {$\bar{\mathbf{u}}_c$} (end);
\end{tikzpicture}}%
\caption{Open-loop interconnection of reshaped position subsystem \eqref{eq:horizontal_subsystem} and controller \eqref{eq:SNI_controller} at steady state.}
\label{sys14}
\end{figure}
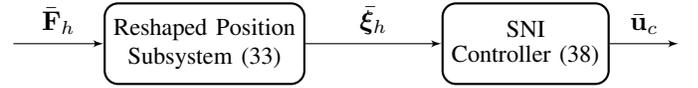

The SNI controller \eqref{eq:controller_state_space} is linear with well-defined steady-state gain $\bm{\Delta}^{-1}$, trivially satisfying Assumption~\ref{assum:steady_state_H2}.

Also, at steady state, the controller output is $\bar{\mathbf{u}}_c = \bm{\Delta}^{-1}\bar{\bm{\xi}}_h$. Thus,
\begin{equation}
\bar{\bm{\xi}}_h^{\,T} \bar{\mathbf{u}}_c = \bar{\bm{\xi}}_h^{\,T} \bm{\Delta}^{-1}\bar{\bm{\xi}}_h = \frac{1}{\delta}\|\bar{\bm{\xi}}_h\|^2 \geq 0,
\end{equation}
satisfying Assumption~\ref{assum:positivity}.

The sector-bound condition \eqref{eq:sector_bound} requires
\begin{equation}\label{eq:sector_bound_specific}
\|\bar{\mathbf{u}}_c\|^2 \leq \gamma^2 \|\bar{\mathbf{F}}_h\|^2
\end{equation}
for some $\gamma \in (0,1)$.

From the steady-state relations
\begin{equation}\label{eq_47}
\|\bar{\mathbf{u}}_c\|^2 = \frac{1}{\delta^2}\|\bar{\bm{\xi}}_h\|^2 = \frac{1}{\delta^2}\left(\bar{x}^2 + \bar{y}^2\right).
\end{equation}

Substituting \eqref{eq:steady_state_solution} into \eqref{eq_47}, we obtain
\begin{equation}
\|\bar{\mathbf{u}}_c\|^2 = \frac{1}{\delta^2}\left[\frac{\bar{F}_x^2}{(mk_p^x)^2} + \frac{\bar{F}_y^2}{(mk_p^y)^2}\right].
\end{equation}

For \eqref{eq:sector_bound_specific} to hold for all $\bar{\mathbf{F}}_h$, it is sufficient that
\begin{equation}\label{eq:sufficient_condition}
\frac{1}{\delta^2} \max\left\{\frac{1}{(mk_p^x)^2}, \frac{1}{(mk_p^y)^2}\right\} \leq \gamma^2.
\end{equation}

This yields the explicit controller parameter constraint
\begin{equation}\label{eq:parameter_constraint}
\delta^2 \geq \frac{1}{\gamma^2 m^2 \min\{(k_p^x)^2, (k_p^y)^2\}}\cdot
\end{equation}

\begin{remark}
Condition \eqref{eq:parameter_constraint} provides the explicit design guidance that the controller parameter $\delta$ must be chosen sufficiently large relative to the minimum inner-loop proportional gain to ensure the sector-bound condition holds.
\end{remark}

The next lemma is needed to establish the main stability result.

\begin{lemma}\label{lem:zero_state_observability}
The horizontal position subsystem \eqref{eq:horizontal_subsystem} is zero-state observable.
\end{lemma}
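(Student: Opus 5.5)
To prove Lemma~\ref{lem:zero_state_observability}, I will work directly from the definition recalled in the remark after Theorem~\ref{thm:main_stability}. Suppose the input is identically zero, $\mathbf{F}_h(t)\equiv\mathbf{0}$, and the output is identically zero, $\mathbf{y}(t)=\bm{\xi}_h(t)\equiv\mathbf{0}$, on some interval $t\geq 0$. I must then show that the full state $(x_1,x_2,x_3,x_4)=(x,\dot x,y,\dot y)$ vanishes identically. I will carry out the argument on the compact form \eqref{eq:compact_horizontal} rather than on the componentwise equations \eqref{eq:horizontal_subsystem}. The reason is that in \eqref{eq:compact_horizontal} the horizontal thrust $\mathbf{F}_h$ is the input, so the attitude-dependent terms involving $x_7,x_9$ are absorbed into the input. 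Setting $\mathbf{F}_h\equiv\mathbf{0}$ therefore decouples the subsystem from the rotational dynamics.

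The key steps are as follows.

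\textbf{Step 1.} Since $\bm{\xi}_h(t)\equiv\mathbf{0}$ on an interval, the function $\bm{\xi}_h$ is constant there. It is also differentiable, because it is a component of the state of a system with Lipschitz vector field. Hence $\dot{\bm{\xi}}_h(t)\equiv\mathbf{0}$, which gives $x_2\equiv 0$ and $x_4\equiv 0$. Equivalently, $x_2=\dot x_1$ and $x_4=\dot x_3$ follow directly from the first and third rows of \eqref{eq:horizontal_subsystem}.

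\textbf{Step 2.} Together with $x_1\equiv x_3\equiv 0$, Step 1 shows that the whole state is identically zero. As a consistency check, the second-order equation $m\ddot{\bm{\xi}}_h+m\mathbf{K}_p\bm{\xi}_h=\mathbf{0}$ is then trivially satisfied. Alternatively, I can observe that for the linear system with $A=\begin{bmatrix}0&1\\-k_p&0\end{bmatrix}$ and $C=[1\ 0]$ in each channel, the observability matrix is the identity, so $(C,A)$ is observable.

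\textbf{Step 3.} As an optional strengthening, I can use the storage function \eqref{eq:storage_function}. When $\mathbf{F}_h\equiv\mathbf{0}$, Lemma~\ref{lem:NNI_horizontal} gives $\dot V=0$. With $\bm{\xi}_h\equiv\mathbf{0}$, this forces $V=\tfrac12 m\|\dot{\bm{\xi}}_h\|^2$ to be constant and its derivative to vanish, which again yields $\dot{\bm{\xi}}_h\equiv\mathbf{0}$.

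The only genuine subtlety is interpretive rather than computational. The physical input to the quadrotor is $u_1$ together with the attitude, whereas the lemma treats $\mathbf{F}_h$ as the input. I will therefore state explicitly that zero input means $\mathbf{F}_h\equiv\mathbf{0}$, in the sense of \eqref{eq:compact_horizontal}. There is also a small notational mismatch: \eqref{eq:horizontal_subsystem} writes the restoring term as $k_p x$, while \eqref{eq:compact_horizontal} writes it as $m k_p x$. This scaling does not affect the argument, since the output alone already determines the velocities.
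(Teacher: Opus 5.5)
Your Steps 1--2 are correct and are essentially the paper's argument: with $\mathbf{F}_h\equiv\mathbf{0}$ and $\bm{\xi}_h\equiv\mathbf{0}$, the kinematic rows $\dot x_1 = x_2$, $\dot x_3 = x_4$ force $x_2 = x_4 = 0$. (The paper additionally notes $\dot x_2=\dot x_4=0$, which is redundant.) Your optional Step 3 does not stand on its own: $\dot V = 0$ with $\bm{\xi}_h\equiv\mathbf{0}$ only makes $\|\dot{\bm{\xi}}_h\|$ constant, and concluding that it vanishes requires Step 1 again, so it is best dropped.
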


\begin{proof}
Consider the case where the output is zero ($\mathbf{y} = \bm{\xi}_h = \mathbf{0}$) and the input is zero ($\mathbf{F}_h = \mathbf{0}$). This implies $x_1 = 0$ and $x_3 = 0$. From \eqref{eq:horizontal_subsystem}, the acceleration equations become
\begin{align}
\dot{x}_2 &= -k_p^x x_1 = 0, \\
\dot{x}_4 &= -k_p^y x_3 = 0.
\end{align}
Since $k_p^x, k_p^y > 0$ and $x_1 = x_3 = 0$, we have $\dot{x}_2 = \dot{x}_4 = 0$. The position dynamics $\dot{x}_1 = x_2$ and $\dot{x}_3 = x_4$ with $x_1 = x_3 = 0$ imply $x_2 = x_4 = 0$. Thus, the entire state vector is zero, establishing zero-state observability.
\end{proof}

\subsection{Main Stability Result}

We now state the main stability theorem for the proposed control scheme. The overall closed-loop architecture is illustrated in Figure~\ref{fig:control_diagram}, which depicts the inner proportional loop and the outer SNI controller acting in positive feedback.

\tikzset{
    boldsignal/.style={line width=1pt, ->, >=Latex},
    block/.style={
        draw,
        rounded corners=3pt,
        minimum height=2.8em,
        minimum width=8.5em,
        align=center,
        line width=1pt
    },
    sum/.style={
        draw,
        circle,
        minimum size=6.5mm,
        line width=1pt
    },
    gain/.style={
        draw,
        rounded corners=3pt,
        minimum height=2.4em,
        minimum width=5em,
        align=center,
        line width=1pt
    },
    controller/.style={
        draw,
        rounded corners=3pt,
        minimum height=3.2em,
        minimum width=8em,
        align=center,
        line width=1pt
    }
}

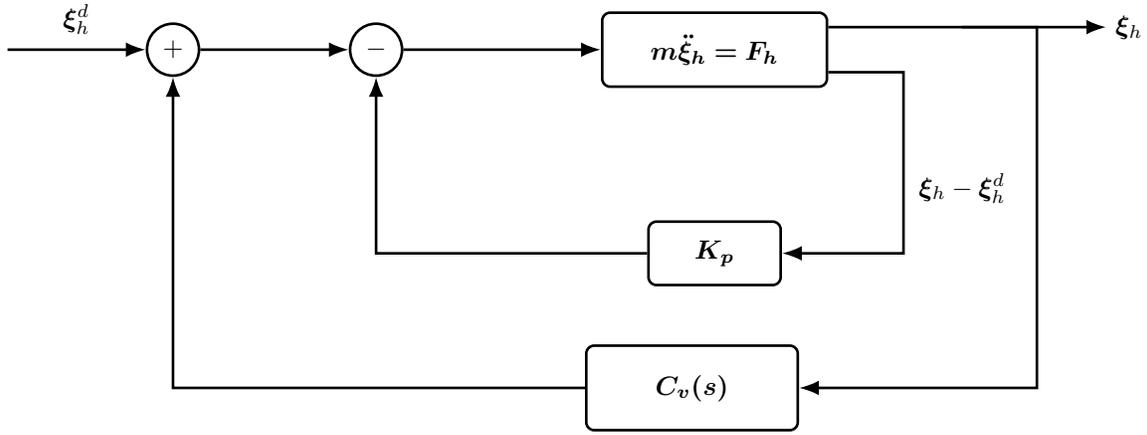
\begin{figure*}\centering
\begin{tikzpicture}[node distance=2.4cm]

\node[sum] (sum1) at (0,0) {$+$};
\node[sum] (sum2) at (2.7,0) {$-$};

\node[block] (plant) at (7.2,0)
{$\boldsymbol{m\ddot{\xi}_h = F_{h}}$};

\node (xilabel) at ($(plant.east)+(4.0cm,0.3)$)
{$\boldsymbol{\xi}_h$};

\node[gain] (kp) at ($(plant.south)+(0,-2.2)$)
{$\boldsymbol{K_p}$};

\node[controller] (sni) at (6.9,-4.5)
{$\boldsymbol{C_v(s)}$};


\draw[boldsignal] ($(sum1)+(-2.2,0)$) -- (sum1)
 node[midway, above=2pt] {$\boldsymbol{\xi}_h^d$};

\draw[boldsignal] (sum1) -- (sum2);
\draw[boldsignal] (sum2) -- (plant);


\draw[boldsignal] ($(plant.east)+(0,0.3)$) --
                  (xilabel) coordinate[pos=0.48](outertap);

\draw[boldsignal] ($(plant.east)+(0,-0.3)$) -- ++(1.0,0)
    |- (kp.east)
    node[pos=0.32, right=2pt] {$\boldsymbol{\xi}_h - \boldsymbol{\xi}_h^d$};

\draw[boldsignal] (kp.west) -| (sum2.south);


\coordinate (outertapshift) at ($(outertap)+(1.0cm,0)$);

\draw[boldsignal] (outertap) --
                  (outertapshift) |- (sni.east);

\draw[boldsignal] (sni.west) -| (sum1.south);

\end{tikzpicture}
\caption{Block diagram of the proposed velocity-free horizontal position control system.
         The inner loop feeds back position $\bm{\xi}_h$ through the proportional gain
         $\boldsymbol{K_p}$; the outer loop feeds back through
         the SNI controller $C_v(s)$.}
\label{fig:control_diagram}
\end{figure*}
\begin{theorem}\label{thm:main_result}
Consider the closed-loop system shown in Figure~\ref{fig:control_diagram}, consisting of the quadrotor horizontal position subsystem \eqref{eq:horizontal_subsystem} and the SNI controller \eqref{eq:controller_state_space} in positive feedback configuration. Suppose the controller parameters satisfy condition \eqref{eq:parameter_constraint} with $\gamma \in (0,1)$. Then the origin of the closed-loop system is asymptotically stable, and the horizontal position converges to zero, \ie \ $\lim_{t\to\infty}\bm{\xi}_h(t) = \mathbf{0}$.
\end{theorem}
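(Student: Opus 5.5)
The plan is to apply Theorem~\ref{thm:main_stability} with $H_1$ the reshaped horizontal position subsystem \eqref{eq:horizontal_subsystem}, with input $\mathbf{u}_1=\mathbf{F}_h$ and output $\mathbf{y}_1=\bm{\xi}_h$, and $H_2$ the integral resonant controller \eqref{eq:controller_state_space}, with input $\mathbf{u}_2=\bm{\xi}_h$ and output $\mathbf{y}_2=\mathbf{u}_c=\mathbf{z}_c$. The loop in Figure~\ref{fig:control_diagram} with $\bm{\xi}_h^d=\mathbf{0}$ is then the positive feedback interconnection \eqref{eq:closed_loop}, since $\mathbf{F}_h=\mathbf{z}_c$ and the controller is driven by $\bm{\xi}_h$. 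I would first record this identification explicitly, so that the closed-loop state is $(\bm{\xi}_h,\dot{\bm{\xi}}_h,\mathbf{z}_c)\in\R^6$. Lipschitz continuity of the vector fields and vanishing at the origin are immediate, because both subsystems are linear in these coordinates (cf.\ \eqref{eq:compact_horizontal}).

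The hypotheses are then checked one at a time. $H_1$ is NNI by Lemma~\ref{lem:NNI_horizontal}, and zero-state observable by Lemma~\ref{lem:zero_state_observability}. For $H_2$ I need the WS-NNI property. The controller is SNI: $\mathbf{C}(s)$ is diagonal with entries $\Gamma/(s+\Gamma\delta)$, which have a stable pole, and $j[\mathbf{C}(j\omega)-\mathbf{C}^*(j\omega)]=2\Gamma\omega/(\omega^2+\Gamma^2\delta^2)\,\mathbf{I}>0$. By Remark~\ref{rem:SNI_WSNNI_equivalence}, SNI implies WS-NNI.

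As a sanity check I would also exhibit the storage function directly. Take $V_2=\frac{1}{2}\mathbf{z}_c^T(\bm{\Gamma}^{-1})\mathbf{z}_c$, which satisfies $P_2B_2=C_2^T$. Then
\[
\dot V_2=-\delta\|\mathbf{z}_c\|^2+\mathbf{z}_c^T\bm{\xi}_h .
\]
This has the wrong form relative to $\dot{\mathbf{y}}_2^T\mathbf{u}_2$, so the cleaner route is through Remark~\ref{rem:SNI_WSNNI_equivalence} and Lemma~\ref{lem:time_domain_NI}. Global asymptotic stability with zero input is clear, since $\dot{\mathbf{z}}_c=-\bm{\Gamma}\bm{\Delta}\mathbf{z}_c$.

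Assumptions~\ref{assum:steady_state_H1}--\ref{assum:positivity} follow from the steady-state computations \eqref{eq:steady_state_solution} and the positivity computation preceding the theorem. For Assumption~\ref{assum:sector_bound}, the chain of identities leading to \eqref{eq:sufficient_condition} shows that \eqref{eq:parameter_constraint} gives $\|\bar{\mathbf{u}}_c\|^2\le\gamma^2\|\bar{\mathbf{F}}_h\|^2$ with $\gamma\in(0,1)$. Theorem~\ref{thm:main_stability} then yields asymptotic stability of the origin of the closed-loop system. Since $\bm{\xi}_h$ is a component of the closed-loop state, $\bm{\xi}_h(t)\to\mathbf{0}$ for initial conditions in the region of attraction.

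I expect the main obstacle to be consistency of the scaling between the plant as written and the steady-state map. Equation \eqref{eq:horizontal_subsystem} gives $\bar x=\bar F_x/(mk_p^x)$, whereas the storage function in Lemma~\ref{lem:NNI_horizontal} uses $m\ddot{\bm{\xi}}_h+\mathbf{K}_p\bm{\xi}_h=\mathbf{F}_h$, which would give $\bar x=\bar F_x/k_p^x$. I would fix one normalisation, redefining $\mathbf{K}_p\to m\mathbf{K}_p$ in the storage function if necessary, and check that the NNI property and \eqref{eq:parameter_constraint} are stated for the same model. A second subtlety is the sign convention of the controller output. I must make sure the physical force applied is $+\mathbf{z}_c$, so the loop is genuinely positive feedback as Theorem~\ref{thm:main_stability} requires.
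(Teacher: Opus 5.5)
Your proposal is correct and is essentially the paper's own proof: Theorem~\ref{thm:main_stability} with Lemmas~\ref{lem:NNI_horizontal} and~\ref{lem:zero_state_observability} for the plant, Remark~\ref{rem:SNI_WSNNI_equivalence} for the controller, and the steady-state computations for Assumptions~\ref{assum:steady_state_H1}--\ref{assum:sector_bound}; your $m\mathbf{K}_p$ rescaling of the storage function genuinely fixes an inconsistency in the proof of Lemma~\ref{lem:NNI_horizontal}, and since the closed loop is linear in $(\bm{\xi}_h,\dot{\bm{\xi}}_h,\mathbf{z}_c)$, the region of attraction you mention is all of $\R^6$, which gives the theorem's unqualified convergence claim. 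Your abandoned sanity check does work with $V_2=\tfrac{1}{2}\mathbf{z}_c^T\bm{\Delta}\mathbf{z}_c$, for which $\dot V_2-\dot{\mathbf{y}}_2^T\mathbf{u}_2=-\Gamma\|\bm{\Delta}\mathbf{z}_c-\bm{\xi}_h\|^2\le 0$; your choice $P=\bm{\Gamma}^{-1}$ satisfies the conditions $PB=C^T$, $PA+A^TP\le 0$ of Lemma~\ref{lem:time_domain_NI}, but those are passivity conditions, so that lemma is not the right tool here and Remark~\ref{rem:SNI_WSNNI_equivalence} is what actually carries the step.
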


\begin{proof}
The result follows from Theorem~\ref{thm:main_stability} by noting that the horizontal position subsystem is NNI (Lemma~\ref{lem:NNI_horizontal}) and zero-state observable (Lemma~\ref{lem:zero_state_observability}), the SNI controller is WS-NNI (Remark~\ref{rem:SNI_WSNNI_equivalence}), and Assumptions~\ref{assum:steady_state_H1}--\ref{assum:sector_bound} are satisfied as verified above. Therefore, asymptotic stability of the equilibrium point $(\bm{\xi}_h, \mathbf{z}_c) = \mathbf{0}$ is guaranteed.
\end{proof}

\begin{remark}
Theorem~\ref{thm:main_result} guarantees robust stability in the sense that stability is maintained for all parameter values satisfying \eqref{eq:parameter_constraint}, regardless of model uncertainties in the quadrotor dynamics, provided the NNI property is preserved.
\end{remark}

\section{Simulation Results}\label{sec:simulation}

Numerical simulations are presented to validate the theoretical results and demonstrate the effectiveness of the proposed control approach. The quadrotor parameters used in the simulation are listed in Table~\ref{tab:simulation_parameters}. These values represent a typical small-scale quadrotor platform.

\begin{table}[!t]
\centering
\caption{Simulation Parameters}
\label{tab:simulation_parameters}
{\normalsize
\renewcommand{\arraystretch}{1.35}
\setlength{\tabcolsep}{10pt}
\begin{tabular}{lc}
\toprule
\textbf{Parameter} & \textbf{Value} \\
\midrule
Mass $m$ & 0.5 kg \\
Moments of inertia $J_{xx} = J_{yy}$ & $4.85 \times 10^{-3}$ kg$\cdot$m$^2$ \\
Moment of inertia $J_{zz}$ & $8.81 \times 10^{-3}$ kg$\cdot$m$^2$ \\
Gravitational acceleration $g$ & 9.81 m/s$^2$ \\
Thrust coefficient $b$ & $2.92 \times 10^{-6}$ N$\cdot$s$^2$ \\
Drag coefficient $d$ & $1.12 \times 10^{-7}$ N$\cdot$m$\cdot$s$^2$ \\
\midrule
Inner-loop gains $k_p^x = k_p^y$ & 5 \\
Outer-loop parameters $\bm{\Delta}$ & $\operatorname{diag}(0.6, 0.6)$ \\
Outer-loop parameters $\bm{\Gamma}$ & $\operatorname{diag}(160, 160)$ \\
\bottomrule
\end{tabular}}
\end{table}

The controller parameters are selected according to condition \eqref{eq:parameter_constraint}. With $m = 0.5$ kg, $k_p^x = k_p^y = 5$, and choosing $\gamma = 0.8$, the minimum required value is
\begin{equation}
\delta_{\min}^2 = \frac{1}{0.64 \times 0.25 \times 25} = 0.25 \quad \Rightarrow \quad \delta_{\min} = 0.5
\end{equation}

The selected value $\delta = 0.6$ satisfies this requirement.

The simulation scenario considers regulation of the horizontal position to the origin from the initial condition
\begin{equation}
\bm{\xi}_h(0) = \begin{bmatrix} 2 \\ -1.5 \end{bmatrix} \text{ m}, \quad
\dot{\bm{\xi}}_h(0) = \begin{bmatrix} 0 \\ 0 \end{bmatrix} \text{ m/s}.
\end{equation}

The control objective is to drive the horizontal position to the desired reference $\bm{\xi}_h^d = \mathbf{0}$.

Figure~\ref{fig:position_tracking} shows the time evolution of the horizontal position components. Both $x$ and $y$ positions converge smoothly and asymptotically to zero, confirming the stability result of Theorem~\ref{thm:main_result}. The convergence is achieved without velocity measurements, relying solely on position feedback through the NI controller design.

\begin{figure*}[!t]
  \centering
  \includegraphics[width=2\columnwidth]{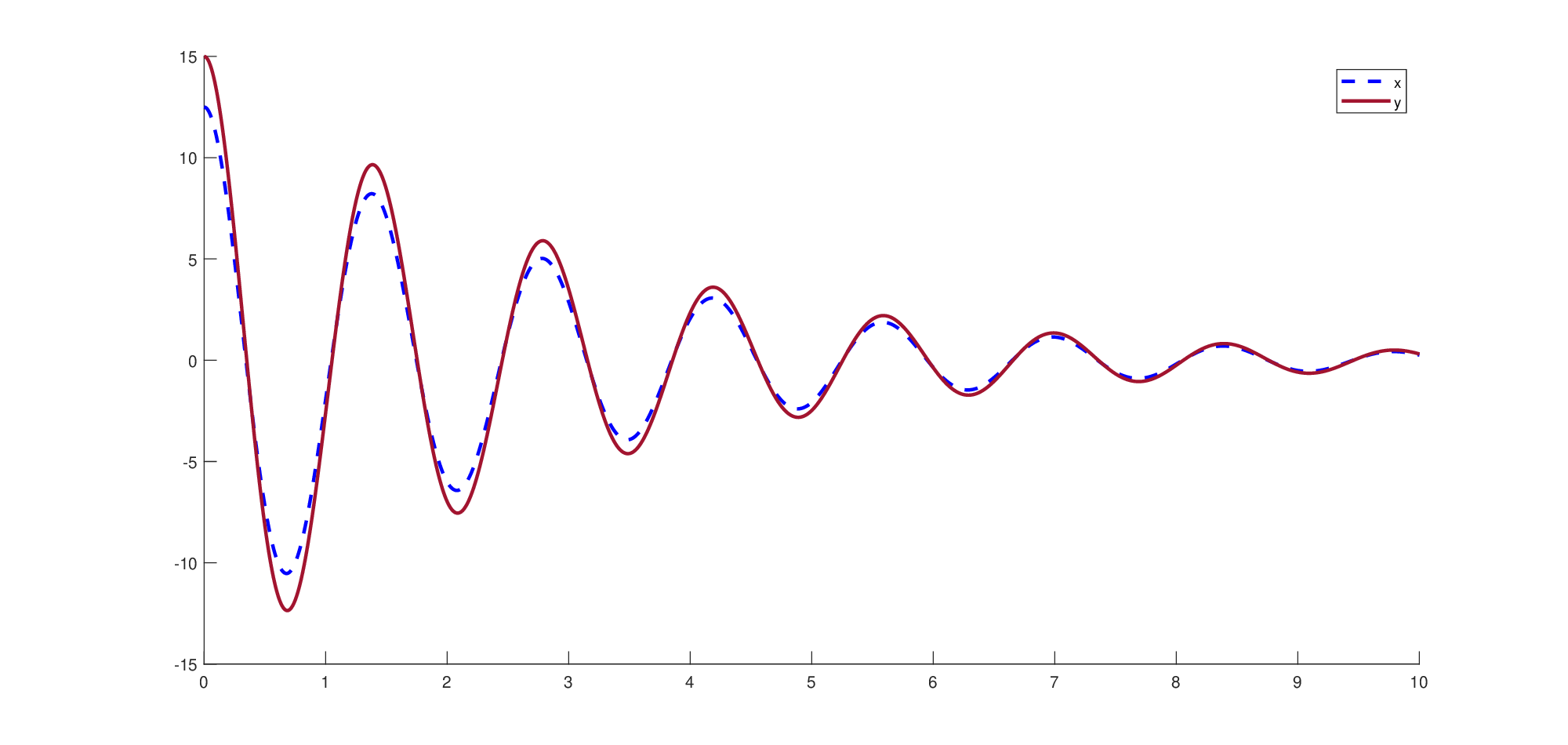}
  \caption{Time evolution of horizontal position components $x(t)$ and $y(t)$, demonstrating asymptotic convergence to the origin.}
  \label{fig:position_tracking}
\end{figure*}

The simulation results demonstrate several key features of the proposed control scheme. First, velocity-free operation is achieved, with asymptotic stability attained without requiring velocity measurements or observers. Second, the position trajectories exhibit smooth transient response with no oscillations or overshoot. Third, the controller demonstrates robustness, as the parameters satisfy the sector-bound condition with margin, ensuring robust stability.

\section{Conclusions}\label{sec:conclusions}

This paper has presented a velocity-free horizontal position control approach for quadrotor systems using nonlinear negative imaginary systems theory.
The first key contribution is the establishment of a theoretical framework showing that the quadrotor horizontal position subsystem with proportional feedback exhibits the nonlinear negative imaginary property, enabling application of NI stability theory. Building on this foundation, a strictly negative imaginary integral resonant controller was designed for the outer loop, achieving asymptotic stability through a positive feedback architecture.
We further derived explicit sector-bound conditions relating controller and plant parameters, providing clear design guidelines for practical implementation. Most significantly, the proposed approach achieves robust asymptotic stability using only position measurements, eliminating the need for velocity sensors or observers, which represents a substantial practical advantage for quadrotor control systems.

\bibliographystyle{IEEEtran}
\bibliography{ifacbib2}

\end{document}